\documentclass[12pt]{article}

\usepackage{amsfonts,amstext,amsmath,amssymb,amsthm,setspace,amsthm,mathtools,caption2,color}
\usepackage[authoryear]{natbib}
\usepackage{hyperref}

%
\doublespacing
\def\({\left(}
\def\){\right)}
\newcommand{\bi}{\begin{itemize}}
\newcommand{\ei}{\end{itemize}}
\def\ba#1\ea{\begin{align}#1\end{align}}

\def\l{\left\{ }
\def\r{\right\} }

\newcommand{\vs}[1]{\vspace{#1 cm}}

\def\wh{\widehat}

\topmargin=-.7in
\headsep=0.5in
\oddsidemargin=0in
\evensidemargin=0in
\textwidth=6.5in
\textheight=9in
\theoremstyle{plain} 
\newtheorem{theorem}{Theorem}
\newtheorem{lemma}{Lemma}

\theoremstyle{definition} 



\begin{document}
\title{\textbf{\Large Estimation of Gini Index within Pre-Specied Error Bound}}
\date{}
\vspace{-2cm}
\maketitle
\vskip -2.5cm
\author{
\vskip -2.0cm
\begin{flushleft}
\textbf{ \small BHARGAB CHATTOPADHYAY} \footnote{Corresponding author: Department of Mathematical Sciences, FO 2.402A, The University of Texas at Dallas, 800 West Campbell Road, Richardson, TX 75080, USA, E-mail: bhargab@utdallas.edu.}\\
\end{flushleft}
\vs{-.7}
\par \noindent\small Department of Mathematical Sciences, The University of Texas at Dallas, Richardson, Texas, USA. (bhargab@utdallas.edu)\\
\vspace{-.7cm}
\begin{flushleft}
 \textbf{ \small SHYAMAL KRISHNA DE} \\
\end{flushleft}
\vs{-.7}
\par \noindent\small
 School of Mathematical Sciences, National Institute of Science Education and Research, Bhubaneshwar, India. (sde@niser.ac.in)\\
}

\maketitle

\vspace{-1cm}
\begin{abstract}


Gini index is a widely used measure of economic inequality. This article develops a general theory for constructing a confidence interval for Gini index with a specified confidence coefficient and a specified width. Fixed sample size methods cannot simultaneously achieve both the specified confidence coefficient and specified width. We develop a purely sequential procedure for interval estimation of Gini index with a specified confidence coefficient and a fixed margin of error. Optimality properties of the proposed method, namely first order asymptotic efficiency and asymptotic consistency are proved. All theoretical results are derived without assuming any specific distribution of the data.

\noindent\textbf{KeyWords:} Fixed width confidence interval, Gini index, sample size planning, stopping rule, and U-statistics.\\
\noindent\textbf{Classification:} 62L12, 62G05, 60G46, 60G40 and 91B82.
\end{abstract}

\section{Introduction}
\label{s:intro}
Economic inequality arises due to the inequality in the distribution
of income and assets among individuals or groups within a society or region
or even between countries. Economic inequality is usually measured to
evaluate the effects of economic policies at the micro or macro level. In
the economics literature, there are several inequality indexes that measure the economic inequality. Among those indexes, Gini inequality index is the most widely used measure. The most celebrated Gini index, as given in \cite{arnold2005inequality}, is
arnold2005inequality
\begin{equation}
G_{F}(X)=\frac{\Delta }{2\mu }, \,\, \text{ where }\,\, \Delta =E\left\vert X_{1}-X_{2}\right\vert, \, \mu =E(X)
\label{def:Gini}
\end{equation}

and $X_{1}$ \& $X_{2}$ are two i.i.d. copies of non-negative random variable $X$. Gini index compares every individual's income with every other individual's income. If there are $n$ randomly selected individuals with incomes given by $X_{1}, \ldots,X_{n}$, then the estimator of the celebrated Gini index is
\begin{align}
\widehat{G}_n=\frac{\widehat{\Delta}_{n}}{2\bar{X}_{n}},
\label{def:Gnest}
\end{align}
where $\bar{X}_{n}$ is the sample mean and $\widehat{\Delta}_{n}$ is the sample Gini's mean difference defined as,
\begin{align}
\bar{X}_{n} = \frac{1}{n}\sum\limits_{i=1}^n X_i\text{ and }\widehat{\Delta}_{n}=\binom{n}{2}^{-1}\sum\limits_{1\leq i_{1}<i_{2}\leq n}\left\vert
X_{i_{1}}-X_{i_{2}}\right\vert.
\label{def:gmd}
\end{align}
The Gini index is undefined if $\bar{X}_{n}=0.$ We ignore this special case. 

For continuous evaluation of different economic policies implemented by the government, computation of Gini index for the whole country or a region is very important. One source of income or expenditure data for all households in a region is census data which is typically collected every 10 years. As a result, Gini index computed based on census data is available only once in every 10 years\footnote{For some countries, Gini indexes reported are based on even more than 15 years old data (see World Bank website). Examples include Belize, Algeria, and Botswana whose Gini indexes are based on data collected in 1999, 1995, and 1994 respectively}. Unless the household income data is annually updated, an estimate of Gini index for intermediate periods between two censuses can not be obtained. Some developed countries conduct household surveys annually\footnote{For instance, European Statistics on Income and Living Conditions conducts a household survey that collects data from at least 273000 individuals from each country in the European Union (see \url{http://epp.eurostat.ec.europa.eu/cache/ITY\_SDDS/EN/ilc\_esms.htm\#data\_rev})}. However, many countries can not afford or do not conduct household survey annually. For those countries, it is useful to draw relatively small number of households to estimate the population Gini index. In order to estimate Gini index for a region, a simple random sampling of households may be used\footnote{For estimating Gini index for smaller countries such as San Marino, Monaco etc., one can conduct small scale surveys by simple random sampling of households.}. For the existing literature on the use of simple random sampling to estimate inequality indexes, we refer to \cite{gastwirth1972estimation}, \cite{bishop1997statistical}, \cite{xu2007} and \cite{davidson2009reliable}.

In the economics literature, there exist innovative methods for constructing confidence intervals for $G_F$ (for e.g., see \cite{xu2007}). However, we know that the confidence interval varies from sample to sample and so is its width. Wider confidence intervals provide less precise information about the true value of the parameter of interest. Since it is desirable to construct shorter confidence intervals, we rather fix the length of the confidence interval, or in other words, the margin of error while achieving the same confidence coefficient. Thus we want to construct a $100(1-\alpha)\%$ fixed-width confidence interval for $G_F$. This problem is know as the fixed-width confidence interval estimation problem. 

No fixed sample size procedure can provide a solution to the fixed-width confidence interval estimation problem (e.g., see \cite{dant1940}). This problem falls in the domain of sequential analysis. For the details about the general theory of fixed-width confidence interval estimation, we refer the interested readers to \cite{sen1981sequentialbook} and \cite{mukhopadhyay2009sequential}. Sequential analysis is concerned with studies where sample sizes are not fixed in advance unlike fixed-sample size procedures. Instead, the sequential estimation procedure depends on collecting observations until an a-priori specified criterion or \textit{stopping rule} is satisfied.

We know that Gini's mean difference is U-statistic with a symmetric kernel of degree 2 and the sample mean is a U-statistic with a symmetric kernel of degree 1 (for e.g., see Hoeffding, \cite{hoeffding1948class}). Under distribution-free scenario, \cite{xu2007} used the central limit theorem for U-statistics to come up with a confidence interval for Gini index. However, this cannot be used to find out a fixed-width confidence interval for Gini index. In this article, we solve the problem of obtaining a fixed-width confidence interval for Gini index using a purely sequential procedure with a stopping rule based on several U-statistics. Apart from being unbiased estimators, U-statistics are also reverse martingales with respect to some non-increasing filtration as proven in \cite{lee1990u}. For more literature on reverse martingales, we refer to classical textbooks on probability theory and stochastic processes such as \cite{loeveprobability},  \cite{doob1953stochastic}, and others. We exploit the reverse martingale property of U-statistics to derive attractive asymptotic properties of our proposed estimation procedure.

In the next section, we formally state the fixed-width confidence interval estimation problem and why a fixed-sample size procedure cannot be used. In section 3, a purely sequential procedure is proposed to construct a $100(1-\alpha)\%$ fixed-width confidence interval for unknown population Gini index and  implementation and characteristics of the sequential procedure is discussed as well.  Section 4 presents simulation study and validate all theoretical results related to our procedure. We conclude this article with some remarks in section 5. 

\section{Problem Statement and Optimal Sample Size}
Consider $n$ randomly selected individuals from some population of interest with incomes denoted by $X_{1}, X_{2},\ldots,X_{n}$. These are nonnegative random variables. A strongly consistent estimator of population Gini index $G_F$ is $\wh{G}_n$ given in \eqref{def:Gnest}. For fixed $\alpha \in (0, 1)$, the goal of this paper is to develop the theory for constructing a $100(1-\alpha)\%$ fixed-width confidence interval for $G_F$. Formally, we would like to construct a confidence interval $J_n = (\wh{G}_n - d, \wh{G}_n +d)$ such that
\begin{align}
\label{fixed-width}
P\( \wh{G}_n - d < G_F < \wh{G}_n + d \) \ge 1-\alpha,
\end{align}
for some prefixed margin of error $d>0$.
Using \citet{xu2007}, we have
\begin{align}
\sqrt{n}\left(\widehat{G}_n-G_F\right)\to N\left(0, \xi^2\right) \,\,\,\, \text{ as } n\to \infty,
\label{clt}
\end{align}
where $\xi^2$ is the asymptotic variance given by,
\begin{align}
\xi^2 =\frac{\Delta ^{2}}{4\mu ^{4}}\sigma ^{2}-\frac{\Delta\tau}{\mu^{3}}+\frac{\Delta^2}{\mu^2}+\frac{\sigma _{1}^{2}}{\mu^{2}}.%
\label{def:xi}
\end{align}
Here,
\begin{align*}
\tau=E(X_{1}\left\vert X_{1}-X_{2}\right\vert)\,\, \text{ and }\,\, \sigma _{1}^{2}=V[E\left\vert X_{1}-X_{2}\right\vert {\vert}X_{1}=x_{1}].
\end{align*}
Based on the asymptotic normality of $\wh{G}_n$, we observe that the coverage probability is
\[
P\( \wh{G}_n - d < G_F < \wh{G}_n + d \) \approx 2 \Phi\( \frac{d \sqrt{n}}{\xi}\)-1,
\]
where $\Phi$ is the distribution function of standard normal random variable. In order to have $100(1-\alpha) \%$ confidence interval, sample size $n$ must satisfy
\begin{align}
\label{approx-cov-prob}
2 \Phi\( \frac{d \sqrt{n}}{\xi}\) -1 \ge 1-\alpha.
\end{align}
  Solving \eqref{approx-cov-prob} for $n$, we obtain $n \ge d^{-2}z_{\alpha /2}^{2}\xi ^{2}$, where $z_{\alpha /2}$ is the upper $\(\frac{\alpha}{2}\)^{th}$ quantile of the standard normal distribution. Thus, the optimal (minimal) sample size required to construct a fixed-width confidence interval for Gini index with approximately $(1-\alpha)$ coverage probability is 
\begin{align}
\label{optimal}
C= \lfloor d^{-2}z_{\alpha /2}^{2}\xi ^{2} \rfloor +1,
\end{align}
provided $\xi$ is known.

The optimal fixed sample size $C$ is unknown since the true value of $\xi $ is unknown in practice. If $C$ were known, one would just draw $C$ observations independently from the population of interest and compute $(\wh{G}_C - d , \wh{G}_C + d)$ which would satisfy \eqref{fixed-width} approximately. Since $C$ is unknown, one must draw samples at least in two stages in order to achieve the desired coverage probability at least approximately. In the first stage, one must estimate $C$ by estimating $\xi$, and then in the subsequent stages one should collect samples until the current sample size is more or equal to the estimated optimal sample size. In this article, we propose a sequential sampling procedure to estimate the optimal sample size $C$ and ensure that the fixed-width confidence interval based on the final sample size attains the desired $(1-\alpha)$ coverage probability.
%
%
%
\section{The Sequential Estimation Procedure}
\label{s:procedure}
In sequential estimation procedures, the parameter estimates are updated as the data is observed. In the first step, a small sample, called the pilot sample, is observed to gather preliminary information about the parameter of interest. Then, in each successive step, one or more additional observations are collected and the estimates of the parameters are updated. After each and every step a decision is taken whether to continue or to terminate the sampling process. This decision is based on a pre-defined stopping rule.

From (\ref{optimal}) we note that the optimal sample size needed to find a fixed-width confidence interval depends on unknown parameter $\xi^2$. So, let us first find a good estimator of the unknown parameter $\xi^2$. Following \cite{xu2007} and \cite{sproule1969sequential}, we consider the following strongly consistent estimator of $\xi^2$ based on U-statistics. Let us define a U-statistic, for each $j=1,2,\dots,n$,
\begin{align}
\widehat{\Delta }_{n}^{(j)}=\binom{n-1}{2}^{-1}\sum\limits_{T_{j}}%
\left\vert X_{i_1} - X_{i_2} \right\vert,
\end{align}
where ${T}_{j}{ =\{(i}_{1}{ ,i}_{2}{ ):1\leq i}_{1}%
{<i}_{2}{\leq n}$ and ${i}_{1}{,i}_{2}{ %
\neq j\}}$. Define $ W_{jn}{ =n}\widehat{{\Delta }}_{n}{-(n-2)}
\widehat{{\Delta }}_{n}^{(j)}$ for $j=1,\ldots,n$, and $\overline{W}_n = n^{-1}\sum_{j=1}^n W_{jn}$. According to \cite{sproule1969sequential}, a strongly consistent estimator of $4\sigma_1^2$ is
\[
{s}_{{wn}}^{{ 2}}={(n-1)}^{-1}\sum \limits_{i=1}^{n}( W_{jn} -\overline{W}_n )^2.
\]
Using \cite{xu2007},
\begin{align}
\widehat{\tau }_{n}=\frac{2}{n(n-1)}\sum\limits_{(n,2)}\frac{1}{2}(%
{ X}_{i_{1}}{ +X}_{i_{2}}{ )}\left\vert { X}_{i_{1}}%
{ -X}_{i_{2}}\right\vert
\end{align}%
is an estimator of $\tau$. Let $S_n^2$ be the sample variance. Thus, the estimator of ${\xi }^{%
{ 2}}$\ is%
\begin{equation}
{ V}_{{ n}}^{{ 2}}=\frac{\widehat{\Delta }%
_{n}^{2}S_{n}^{2}}{4\overline{X}_{n}^{4}}-\frac{\widehat{\Delta }_{n}}{%
\overline{X}_{n}^{3}}\widehat{\tau }_{n}+\frac{\widehat{\Delta }_{n}^{2}}{%
\overline{X}_{n}^{2}}+\frac{s_{wn}^{2}}{4\overline{X}_{n}^{2}},
\label{est-of-xi2}
\end{equation}
similar to \cite{xu2007}.
Using \cite{sproule1969sequential} and theorem 3.2.1 of \cite{sen1981sequentialbook}, we conclude that $V_{n}^{2}$ is a strongly consistent estimator of $\xi ^{2}$.
Based on this estimator of $\xi^2$, we define the stopping rule $N_d$, for every $d>0$, as
\begin{equation}
N_{d}\, \text{ is the smallest integer }\, n(\geq m)\text{ such that }n\geq \left(\frac{z_{\alpha/2}}{d}\right)^2\left( V_{n}^2+n^{-1}\right).
\label{stopping-rule}
\end{equation}
Here, $m$ is called the initial or pilot sample size, and the term $n^{-1}$ is known as a correction term. Note that $V_n$ can be very close to zero with positive probability. Without the correction term, the inequality \eqref{stopping-rule} may be satisfied for very small $n$ terminating the sampling process too early. Thus the correction term $n^{-1}$ ensures that the sampling process for estimating the optimal sample size does not stop too early. For details about the correction term, we refer to \cite{sen1981sequentialbook}.

From (\ref{stopping-rule}), we note that, $N_d\geq \left(\frac{z_\alpha/2}{d}\right)^2N_d^{-1}$, i.e., the final sample size must be at least $z_{\alpha/2}/d$. Therefore, we consider the pilot sample size to be $m=\max{\{4,z_{\alpha/2}/d\}}$. This technique of estimating pilot sample size can also be found in \cite{mukhopadhyay2009sequential}.

Recall that the optimal sample size required to achieve $100(1-\alpha)\%$ confidence interval for Gini index is $C$ which is unknown in practice. The stopping variable $N_d$ defined in \eqref{stopping-rule} serves as an estimator of $C$. Below, we develop a purely sequential procedure to estimate the optimal sample size $C$. 
\subsection{Implementation and Characteristics}
\label{results}
 We propose the following purely sequential estimation procedure to estimate $C$:\\
\textbf{Stage 1:} Compute the pilot sample size $m=\max{\{4,z_{\alpha/2}/d\}}$ and draw a random sample of size $m$ from the population of interest. Based on this pilot sample of size $m$, obtain an estimate of $\xi^2$ by finding ${V}_{{m}}^{{2}}$ as given in (\ref{est-of-xi2}) and check whether $m\geq (z_{\alpha/2}/d)^2\left(V_{m}^2+m^{-1}\right)$. If $m < (z_{\alpha/2}/d)^2\left(V_{m}^2+m^{-1}\right)$ then go to the next step. Otherwise, set the final sample size $N_d=m$.\\
\textbf{Stage 2:} Draw an additional observation independent of the pilot sample and update the estimate of $\xi^2$ by computing ${V}_{m+1}^{2}$. Check if $m+1\geq (z_{\alpha/2}/d)^2\left(V_{m+1}^2+(m+1)^{-1}\right)$. If $m+1 < (z_{\alpha/2}/d)^2\left(V_{m+1}^2+(m+1)^{-1}\right)$ then go to the next step. Otherwise, if $m+1 \geq (z_{\alpha/2}/d)^2\left(V_{m+1}^2+(m+1)^{-1}\right)$ then stop further sampling and report the final sample size as $N_d=m+1$.

This process of collecting one observation in each stage after stage 1 is continued until there are $N_d$ observations such that $N_d\geq (z_{\alpha/2}/d)^2\left(V_{N_d}^2+{N_d}^{-1}\right)$. At this stage, we stop sampling and report the final sample size as $N_d$.

Based on the above algorithm, the sampling process will stop at some stage. This is proved in Lemma 1 which states that if observations are collected using (\ref{stopping-rule}), under appropriate conditions, $P(N_d<\infty)=1$. This is a very important property of any sequential procedure since it mathematically ensures that the sampling will be terminated eventually.

Next, we establish some desirable asymptotic properties of our proposed sequential procedure. First, we prove that the final sample size $N_d$ required by our sampling strategy is close to the optimal sample size C at least asymptotically. This property is known as asymptotic efficiency property of sequential procedure which ensures that, on average, we collect only the minimum number of samples to achieve certain accuracy of estimation. Second, we show that the fixed-width confidence interval $\left(\widehat{G}_{N_d}-d,\widehat{G}_{N_d}+d\right)$ contains the true value of Gini index $G_F$ nearly with probability $1-\alpha$. We formally state these results in theorems 1 and 2.

\begin{theorem} If the parent distribution $F$ is such that E$[X^{4}]$ and E$[X^{-\beta}]$ exist for $\beta>4$,\footnote{If for a certain distribution function, negative moments doesn't exist, then theorem 1 will hold, if ${E}\left[\underset{n\geq m}{\sup } \,s_{wn}^{2}\overline{X}_{n}^{-2}\right]$, ${ E}\left[\underset{n\geq m}{\sup } \widehat{\tau }_{n}\overline{X}_{n}^{-2}\right]$ and ${ E}\left[ \underset{n\geq m}{\sup }S_{n}^{2}\overline{X}_{n}^{-2}\right]$ are finite.}  then the stopping rule in (\ref{stopping-rule}) yields the following asymptotic optimality properties:
\begin{itemize}
\item[(i)] $N_{d}/C\overset{a.s.}{\to} 1$ as $d \downarrow 0$.
\item[(ii)] $E\left(N_{d}/C\right)\to 1$ as $d \downarrow 0$.
\end{itemize}
\label{thm:main} 
\end{theorem}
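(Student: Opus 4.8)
The plan is to follow the classical Chow--Robbins scheme (as in \cite{sen1981sequentialbook}, \cite{sproule1969sequential}, \cite{mukhopadhyay2009sequential}): part (i) will come from the strong consistency of $V_n^2$ together with the ``basic inequalities'' built into the stopping rule, and part (ii) will need, in addition, an integrable random variable dominating $N_d/C$, which is where the moment hypotheses and the reverse-martingale structure of the underlying $U$-statistics enter. Write $a=z_{\alpha/2}/d$. As $d\downarrow0$ one has $C\to\infty$ with $d^2C/(z_{\alpha/2}^2\xi^2)\to1$, hence $a^2/C\to\xi^{-2}$; also $N_d\ge m=\max\{4,a\}\to\infty$, so that $V_{N_d}^2\to\xi^2$ and $V_{N_d-1}^2\to\xi^2$ a.s. From \eqref{stopping-rule} I record: always $N_d\ge a^2V_{N_d}^2$; and whenever $N_d>m$, $N_d-1<a^2\big(V_{N_d-1}^2+(N_d-1)^{-1}\big)$. (The event $N_d=m$ a.s.\ does not occur for small $d$, since at $n=m=a$ the stopping inequality reduces to $V_m^2\le0$, which fails once $V_m^2$ is near $\xi^2>0$; in any case the bounds below hold trivially there.)

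For part (i): dividing $N_d\ge a^2V_{N_d}^2$ by $C$ gives $\liminf_{d\downarrow0}N_d/C\ge\xi^{-2}\cdot\xi^2=1$ a.s.; dividing $N_d-1<a^2(V_{N_d-1}^2+(N_d-1)^{-1})$ by $C$ and using $(N_d-1)^{-1}\to0$ gives $\limsup_{d\downarrow0}N_d/C\le1$ a.s. Hence $N_d/C\overset{a.s.}{\to}1$. For part (ii): given (i), it suffices to produce a fixed integrable $Z$ with $N_d/C\le Z$ for all small $d$, since dominated convergence then yields $E(N_d/C)\to1$ (and $EN_d<\infty$). The ``just before stopping'' inequality, together with $m^{-1}\le1$, gives $N_d\le1+a^2(V^*+1)$ with $V^*:=\sup_{n\ge4}V_n^2$, so $N_d/C\le 1/C+(a^2/C)(V^*+1)\le2(V^*+1)/\xi^2$ once $d$ is small. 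Everything thus reduces to showing $E[V^*]<\infty$.

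Proving $E[V^*]<\infty$ is the heart of the matter, and the main obstacle. First I would use $\widehat\Delta_n\le2\bar X_n$ --- valid because $\sum_{i_1<i_2}|X_{i_1}-X_{i_2}|\le\sum_{i_1<i_2}(X_{i_1}+X_{i_2})=(n-1)\sum_iX_i$ --- to discard the negative middle term in \eqref{est-of-xi2} and to bound the two $\widehat\Delta_n^2$ terms, leaving $V_n^2\le S_n^2\bar X_n^{-2}+4+\tfrac14 s_{wn}^2\bar X_n^{-2}$; so it is enough to bound $E[\sup_nS_n^2\bar X_n^{-2}]$ and $E[\sup_n s_{wn}^2\bar X_n^{-2}]$. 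Each of these I would estimate via H\"older's inequality, separating a numerator supremum from $\sup_n\bar X_n^{-2}$: $S_n^2$ and $s_{wn}^2$ (the latter through its representation in terms of $U$-statistics, as in \cite{sproule1969sequential} and Theorem 3.2.1 of \cite{sen1981sequentialbook}) are reverse martingales, so Doob's maximal inequality for reverse martingales bounds $E[\sup_nS_n^{2r}]$ and $E[\sup_n s_{wn}^{2r}]$ by a constant times the $r$-th moment of the kernels, finite for $r$ up to (just below) $2$ because $E[X^4]<\infty$; and $\bar X_n^{-1}$ is a nonnegative reverse submartingale (conditional Jensen applied to the reverse martingale $\bar X_n$), so the submartingale maximal inequality bounds $E[\sup_n\bar X_n^{-2s}]$ by $m^{2s}E[X_1^{-2s}]$, finite for $2s<\beta$. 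Since $\beta>4$, conjugate exponents $r,s$ with $1/r+1/s=1$, $2r<4$, $2s<\beta$ exist, and the estimate closes. (If $F$ lacks negative moments, one instead assumes outright the finiteness of the three expectations in the footnote, and the same reduction applies, now retaining the $\widehat\tau_n\bar X_n^{-2}$ term as well.) Within this step the delicate part is $s_{wn}^2$, which is not literally a $U$-statistic and becomes amenable to the reverse-martingale maximal inequality only after it is re-expressed through $U$-statistics.
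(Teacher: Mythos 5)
Your proposal follows essentially the same route as the paper: part (i) via the two-sided ``basic inequality'' from the stopping rule together with strong consistency of $V_n^2$, and part (ii) by dominating $N_d/C$ with a multiple of $\sup_n V_n^2+1$ and proving $E[\sup_n V_n^2]<\infty$ through reverse-(sub)martingale maximal inequalities applied separately to $S_n^2$, $s_{wn}^2$ and $\overline{X}_n^{-1}$ (the paper's Lemmas 2--3), then invoking dominated convergence. Your only deviations are cosmetic --- you discard the nonpositive $\widehat{\tau}_n$ cross term outright where the paper bounds it, and you use H\"older with general conjugate exponents where the paper uses Cauchy--Schwarz --- so the argument is correct and matches the paper's proof in substance.
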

\begin{theorem} If the parent distribution $F$ is such that E$[X^{4}]$ exist, then the stopping rule in (\ref{stopping-rule}) yields 
\label{thm:2}
\begin{align}
P\left(\widehat{G}_{N_{d}}-d<G_F<\widehat{G}_{N_{d}}+d\right)\to 1-\alpha\text{ as }d \downarrow 0.
\end{align}
\end{theorem}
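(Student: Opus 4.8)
\medskip
\noindent\emph{Proof strategy for Theorem~\ref{thm:2}.} The plan is to standardise the coverage probability and then combine two ingredients: the asymptotic size of $N_d$ read off from the stopping boundary, and an Anscombe-type random central limit theorem for $\widehat{G}_n$. Writing $Z\sim N(0,1)$, note that
\[
P\big(\widehat{G}_{N_d}-d<G_F<\widehat{G}_{N_d}+d\big)=P\!\left(\frac{\sqrt{N_d}\,\lvert\widehat{G}_{N_d}-G_F\rvert}{\xi}<\frac{d\sqrt{N_d}}{\xi}\right),
\]
so it suffices to show that the variable on the left converges in distribution to $\lvert Z\rvert$ while the one on the right converges in probability to the constant $z_{\alpha/2}$.

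First I would analyse $d\sqrt{N_d}/\xi$. By Theorem~\ref{thm:main}(i), $N_d/C\to1$ a.s. as $d\downarrow0$, and since $C=\lfloor d^{-2}z_{\alpha/2}^2\xi^2\rfloor+1\sim d^{-2}z_{\alpha/2}^2\xi^2$, this gives $d^2N_d/z_{\alpha/2}^2\to\xi^2$ a.s., i.e. $d\sqrt{N_d}/\xi\to z_{\alpha/2}$ a.s. (One can also see this directly by sandwiching: the stopping rule \eqref{stopping-rule} gives $V_{N_d}^2+N_d^{-1}\le d^2N_d/z_{\alpha/2}^2$ at termination and, since $C\to\infty$ forces $N_d>m$ for small $d$, $d^2(N_d-1)/z_{\alpha/2}^2<V_{N_d-1}^2+(N_d-1)^{-1}$ at the previous stage; as $V_n^2$ is strongly consistent for $\xi^2$ and $N_d\uparrow\infty$ a.s., both bounds pin $d^2N_d/z_{\alpha/2}^2$ down to $\xi^2$.)

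Next I would establish $\sqrt{N_d}\,(\widehat{G}_{N_d}-G_F)/\xi\overset{d}{\to}Z$ as $d\downarrow0$. Since $\widehat{G}_n=\widehat{\Delta}_n/(2\bar X_n)$ is a continuously differentiable function of the U-statistics $\widehat{\Delta}_n$ (symmetric kernel of degree $2$) and $\bar X_n$ (symmetric kernel of degree $1$), a Taylor expansion about $(\Delta,\mu)$ gives
\[
\sqrt{n}\,(\widehat{G}_n-G_F)=\sqrt{n}\left[\frac{1}{2\mu}(\widehat{\Delta}_n-\Delta)-\frac{\Delta}{2\mu^2}(\bar X_n-\mu)\right]+R_n,
\]
where the bracketed linear term is asymptotically a single mean-zero U-statistic of variance $\xi^2$, consistent with \eqref{clt}--\eqref{def:xi}, and $\sqrt{n}\,R_n=o_P(1)$ using $\bar X_n\to\mu>0$ a.s. and $E[X^4]<\infty$. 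For the linear term I would verify Anscombe's uniform-continuity-in-probability condition by applying Doob's maximal inequality to the reverse martingale associated with the U-statistic (cf. \cite{lee1990u}); together with $N_d\uparrow\infty$ a.s. and $N_d/C\to1$ a.s. from Theorem~\ref{thm:main}(i), Anscombe's theorem then yields the random CLT for the linear term, and the remainder at $n=N_d$ is $o_P(1)$ because $N_d\to\infty$ while $d^2N_d$ stays bounded in probability.

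Finally, combining the two parts via Slutsky's theorem,
\[
P\!\left(\frac{\sqrt{N_d}\,\lvert\widehat{G}_{N_d}-G_F\rvert}{\xi}<\frac{d\sqrt{N_d}}{\xi}\right)\longrightarrow P\big(\lvert Z\rvert<z_{\alpha/2}\big)=1-\alpha,
\]
the last equality holding by the definition of $z_{\alpha/2}$ and continuity of $\Phi$ at $\pm z_{\alpha/2}$, which proves the theorem. The main obstacle I anticipate is the random CLT step: one must control $\max_{\lvert k\rvert\le\varepsilon n}\sqrt{n}\,\lvert U_{n+k}-U_n\rvert$ uniformly for the underlying U-statistics and then propagate this through the ratio $\widehat{\Delta}_n/(2\bar X_n)$; moreover, since Theorem~\ref{thm:2} assumes only $E[X^4]<\infty$ and, unlike Theorem~\ref{thm:main}, no negative moments, the denominator $\bar X_n$ must be dealt with through convergence in probability rather than in $L^1$.
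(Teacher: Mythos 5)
Your proposal is correct and follows essentially the same route as the paper: a Taylor (delta-method) expansion of $\widehat{G}_n=\widehat{\Delta}_n/(2\bar X_n)$ about $(\Delta,\mu)$, an Anscombe-type random CLT for the underlying U-statistics verified through the u.c.i.p. condition together with $N_d/C\to1$ a.s., and a negligible remainder term. The only (welcome) difference is that you make explicit the final step --- $d\sqrt{N_d}/\xi\to z_{\alpha/2}$ a.s. combined with the random CLT via Slutsky to get the coverage limit $2\Phi(z_{\alpha/2})-1=1-\alpha$ --- which the paper leaves implicit after establishing $\sqrt{N_d}(\widehat{G}_{N_d}-G_F)\overset{\mathcal{L}}{\to}N(0,\xi^2)$.
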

Theorems 1 and 2 are proved in the appendix. Part (i) of theorem 1 implies that the ratio of final sample size of our procedure and the optimal sample size, C asymptotically converges to 1. Part (ii) of theorem 1 implies that the ratio of the average final sample size of our procedure and C asymptotically converges to 1. This property is called first order asymptotic efficiency property as it can be found in \cite{mukhopadhyay2009sequential}. Theorem 2 implies that the coverage probability produced by the fixed-width confidence interval $\left(\widehat{G}_{N_d}-d,\widehat{G}_{N_d}+d\right)$ attains the desired level $1-\alpha$ asymptotically. This property is called asymptotic consistency. Thus, we prove that the proposed purely sequential procedure enjoys both asymptotic efficiency property and asymptotic consistency property.

\section{Simulation Study}

In this section, we validate the asymptotic properties of our method stated in theorems 1 and 2 through Monte Carlo study. To implement the sequential procedure, we fix $d (=0.01)$ and $\alpha (=0.1)$. Using the pilot sample size formula $m=\max{ \{ 4, z_{\frac{\alpha}{2}}/d\} }$, the pilot sample size considered here is 165. Then, we implement the sequential procedure described in section \ref{results} and estimate the average sample size ($\overline{N}$), the maximum sample size ($\max (N)$), the standard error ($s(\overline{N})$) of $\overline{N}$, the coverage probability ($p$), and its standard error ($s_p$) based on $2000$ replications by drawing random samples from gamma distribution (shape $=2.649$,rate $=0.84$), log-normal distribution (mean $=2.185$, sd $=0.562$), and Pareto (20000, 5). Table \ref{tab1} summarizes the numerical results obtained from the simulation study. The parameters of log-normal and gamma distributions are same as used by Ransom and Cramer (\citeyear{ransom1983income}). 
%

From the fourth column of table \ref{tab1}, we  find that the ratio of the average final sample size and C is close to 1. Moreover, column 6 of table \ref{tab1} illustrates that the attained coverage probability is very close to the desired level of $90\%$. Thus, we find that the simulation results validate all theoretical results mentioned in the previous section, and the performance of the procedure is satisfactory for the above mentioned distributions. 
%
\section{Concluding Remarks}
Gini index is a widely used measure of economic inequality index. In order to evaluate the economic policies adopted by a government, it is important to estimate Gini index at any specific time period. If the income data for all households in the region of interest is not available, one should estimate Gini index by drawing a simple random sample of households from that region. This article develops a purely sequential procedure that provides a $100(1-\alpha)\%$ fixed-width confidence interval for Gini index. Without assuming any specific distribution for the data, we show that the ratio of the final sample size and the optimal sample size approaches 1. We also show that the confidence interval constructed using our proposed sequential method attains the required coverage probability. Thus, based on these results, we conclude that the proposed sequential estimation strategy can efficiently construct  a $100(1-\alpha)\%$ fixed-width confidence interval for Gini index. In this article, we consider that after pilot sample, one additional observation is collected in each step. If instead, a group of $r (\geq 1$, say) observations are collected in each step after the pilot sample stage, the same properties will hold. The proofs will be similar to the ones in Appendix.

Apart from economics, there are other fields where researchers report Gini index. For instance, in social sciences and economics, the Gini index is used to measure inequality in education (see \cite{thomas2001measuring}). In ecology, the Gini index is used as a measure of biodiversity (for e.g., see \cite{wittebolle2009initial}). \cite{asada2005assessment} uses Gini index as a measure of the inequality of health related quality of life in a population. \cite{shi2003greedy} uses Gini index to evaluate the fairness achieved by internet routers in scheduling packet transmissions from different flows of traffic. Possible application of Gini index in so many fields such as sociology, health science, ecology, engineering, and chemistry motivates us to develop the theory for constructing a fixed-width confidence interval for Gini index. 
%
\section{Appendix}
\begin{lemma}
Under the assumption that $\xi <\infty$, for any $d>0$, the stopping time $N_d$ is finite, that is, $P(N_d < \infty) =1$.
\end{lemma}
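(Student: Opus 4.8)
The plan is to leverage the strong consistency of $V_n^2$ recorded in Section~\ref{s:procedure}, namely $V_n^2 \overset{a.s.}{\to}\xi^2$ as $n\to\infty$, together with the hypothesis $\xi<\infty$. The idea is simple: in the stopping inequality \eqref{stopping-rule} the left-hand side $n$ diverges deterministically, while the right-hand side $(z_{\alpha/2}/d)^2(V_n^2+n^{-1})$ converges almost surely to the finite constant $(z_{\alpha/2}/d)^2\xi^2$. Hence along almost every sample path the inequality defining $N_d$ is eventually satisfied, so $N_d$ is finite with probability one.

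To carry this out, I would argue by contradiction. Suppose $P(N_d=\infty)>0$. Since $N_d$ is the smallest $n\ge m$ for which the inequality in \eqref{stopping-rule} holds, on the event $\{N_d=\infty\}$ that inequality fails for every $n\ge m$, that is,
\[
n<\left(\frac{z_{\alpha/2}}{d}\right)^{2}\left(V_n^2+n^{-1}\right)\qquad\text{for all }n\ge m.
\]
Rearranging, $V_n^2>(d/z_{\alpha/2})^2\,n-n^{-1}$ for every $n\ge m$ on this event; letting $n\to\infty$ forces $V_n^2\to\infty$ there. This contradicts $V_n^2\to\xi^2<\infty$ almost surely. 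Therefore $P(N_d=\infty)=0$, i.e.\ $P(N_d<\infty)=1$.

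The step requiring the most care — and there is no serious obstacle here — is the justification of the two facts used about $V_n^2$: first, that $V_n^2$ is well defined and almost surely finite for each $n\ge m$, which holds because $m\ge 4$ makes all the underlying U-statistics ($\widehat\Delta_n$, $\widehat\Delta_n^{(j)}$, $s_{wn}^2$, $\widehat\tau_n$) and the sample variance $S_n^2$ well defined, while $\bar X_n>0$ almost surely keeps the denominators from vanishing; and second, that the almost sure convergence $V_n^2\to\xi^2$ is precisely the strong consistency quoted from \cite{sproule1969sequential} and Theorem~3.2.1 of \cite{sen1981sequentialbook}, which needs only $\xi<\infty$. (Any negativity of $V_n^2$ for small $n$ only makes termination easier and does not affect the argument.) Granting these, finiteness of the stopping time is the routine statement that a divergent deterministic sequence eventually overtakes an almost surely convergent one.
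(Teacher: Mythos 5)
Your proof is correct and follows essentially the same route as the paper: it combines the definition of the stopping rule in \eqref{stopping-rule} with the strong consistency $V_n^2 \overset{a.s.}{\to} \xi^2$, observing that the divergent left-hand side must eventually overtake the almost surely convergent right-hand side. Your contradiction argument simply makes explicit the details that the paper's one-line proof leaves implicit.
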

\begin{proof}
The lemma 1 is proved by using \eqref{stopping-rule} and the fact that $V_n^2$ is strongly consistent estimator of $\xi^2$ and $N_d \to \infty$ as $d \downarrow 0$ almost surely.
\end{proof}
\begin{lemma}\label{A1}
The value of sample Gini index lies between 0 and 1.
\end{lemma}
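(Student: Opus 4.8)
The plan is to verify the two inequalities $0\le\widehat{G}_n$ and $\widehat{G}_n\le 1$ separately, both by elementary arguments that exploit only the non-negativity of the incomes $X_1,\dots,X_n$.

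For the lower bound, note that every summand $|X_{i_1}-X_{i_2}|$ defining $\widehat{\Delta}_n$ in \eqref{def:gmd} is non-negative, so $\widehat{\Delta}_n\ge 0$; and since the $X_i$ are non-negative with $\bar X_n\ne 0$ (the degenerate case $\bar X_n=0$ having been set aside right after \eqref{def:Gnest}), we have $\bar X_n>0$. Hence $\widehat{G}_n=\widehat{\Delta}_n/(2\bar X_n)\ge 0$.

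For the upper bound, the only step needing a short computation, I would use the elementary inequality $|a-b|\le a+b$ valid for all non-negative reals $a,b$, applied to each unordered pair, giving
\[
\sum_{1\le i_1<i_2\le n}\left\vert X_{i_1}-X_{i_2}\right\vert\;\le\;\sum_{1\le i_1<i_2\le n}\left(X_{i_1}+X_{i_2}\right).
\]
The key observation is the counting identity that each index $i\in\{1,\dots,n\}$ appears in exactly $n-1$ of the unordered pairs, so the right-hand side equals $(n-1)\sum_{i=1}^{n}X_i$. Dividing both sides by $\binom{n}{2}=n(n-1)/2$ yields $\widehat{\Delta}_n\le \tfrac{2}{n}\sum_{i=1}^{n}X_i=2\bar X_n$, and therefore $\widehat{G}_n\le 1$.

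Putting the two bounds together gives $0\le\widehat{G}_n\le 1$, as claimed. There is no genuine obstacle here; the only points requiring care are the pairwise bound $|a-b|\le a+b$ (which is where non-negativity of the $X_i$ is essential—it would fail for signed data), the bookkeeping with the normalizing constant $\binom{n}{2}$, and the standing convention that $\bar X_n\neq 0$ so that $\widehat{G}_n$ is well defined.
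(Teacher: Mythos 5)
Your proof is correct, but it takes a different route from the paper's. The paper invokes the Damgaard--Weiner rewriting of the sample Gini index in terms of the ordered incomes $Y_1\le\dots\le Y_n$, namely $\widehat{G}_n=\frac{2\sum_{i=1}^{n}iY_i}{n\sum_{i=1}^{n}Y_i}-\frac{n+1}{n}$, and bounds the numerator by replacing each index $i$ with $n$, which yields the slightly sharper upper bound $\widehat{G}_n\le\frac{n-1}{n}$; the lower bound $0\le\widehat{G}_n$ is simply asserted there (it follows from a Chebyshev/rearrangement-type inequality, since the $Y_i$ are sorted in the same order as the weights $i$). You instead work directly from the defining formula \eqref{def:Gnest}--\eqref{def:gmd}: the lower bound is immediate from $\widehat{\Delta}_n\ge0$ and $\bar X_n>0$, and the upper bound follows from $|a-b|\le a+b$ for non-negative reals together with the count that each index lies in exactly $n-1$ unordered pairs, giving $\widehat{\Delta}_n\le 2\bar X_n$. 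Your argument is self-contained (no appeal to the reformulation in terms of order statistics) and makes the non-negativity step explicit where the paper leaves it implicit; the paper's approach buys the marginally tighter bound $(n-1)/n$, which is not needed for the lemma as stated. Both are valid.
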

\begin{proof}
Let $Y_1,\dots,Y_n$ be the ordered incomes of $n$ persons where $Y_1$ represents the income of the poorest person and $Y_n$ represents the income of the richest person. Using \cite{damgaard2000describing}, Gini index can be rewritten as 
\begin{align}
0\leq\widehat{G}_n&=\frac{2\sum_{i=1}^{n}iY_i}{n\sum_{i=1}^{n}Y_i}-\frac{n+1}{n}\leq \frac{2n\sum_{i=1}^{n}Y_i}{n\sum_{i=1}^{n}Y_i}-\frac{n+1}{n}=\frac{n-1}{n}\leq 1.\notag
\end{align}
This proves the lemma.
\end{proof}
\subsection{Proof of Theorem \ref{thm:main}}
%
In this subsection, we prove some lemmas that are essential to establish theorem 1 and theorem 2. First, we introduce a few notations. Note from \eqref{stopping-rule} that $N_d \ge {\frac{z^2_{\alpha/2}}{d^2}} \, N_d^{-1}$, i.e., $N_d \ge {\frac{z_{\alpha/2}}{d}}(=m)$ with probability 1. Suppose $\boldsymbol{X}_{(n)} = (X_{(1)}, \ldots, X_{(n)}) $ denotes the $n$ dimensional vector of order statistics from the sample $X_1, \ldots, X_n$, and $\mathcal{F}_n$ is the $\sigma $-algebra generated by $( \boldsymbol{X}_{(n)}, X_{n+1}, X_{n+2}, \ldots).$ By \cite{lee1990u}, $\left\{ \overline{{ X}}_{n}{,\mathcal{F} }_{n}\right\} $, $\left\{ { S}_{{ n}}^{{ 2}}{ ,\mathcal{F} }_{n}\right\} $, $\left\{ \text{$\widehat{{ \tau}}_{n}{ ,\mathcal{F} }_{n}$}\right\} $, $\left\{ \widehat{{\Delta }}_{n},{ \mathcal{F} }_{n}\right\} $, and their convex functions are all reverse submartingales. 
%
Using reverse submartingale properties, let us prove the following lemmas.
\begin{lemma}\label{A2}Let $\overline{X}_n$ be the sample mean based on non-negative i.i.d. observations $X_1, \ldots, X_n$. Then, if $E(X^{-s})<\infty$, for $s>r$ and $r \ge 1$,
\begin{align}
E\( \max_{ n \ge m } \frac{1}{ \overline{X}_n^{r} } \)<\infty.
\end{align}
\end{lemma}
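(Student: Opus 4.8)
The plan is to exploit the reverse submartingale structure noted just before the lemma. Since the $X_i$ are non-negative i.i.d.\ and $x\mapsto x^{-r}$ is convex and decreasing on $(0,\infty)$, the process $\{\overline X_n^{-r},\mathcal F_n\}_{n\ge m}$ is a non-negative reverse submartingale: conditioning on $\mathcal F_n$ and applying the conditional Jensen inequality together with the reverse-martingale identity $E[\overline X_{n-1}\mid\mathcal F_n]=\overline X_n$ gives $E[\overline X_{n-1}^{-r}\mid\mathcal F_n]\ge (E[\overline X_{n-1}\mid\mathcal F_n])^{-r}=\overline X_n^{-r}$, so the values increase (in conditional expectation) as the index $n$ decreases. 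The natural tool for controlling the supremum of a non-negative reverse submartingale over $n\ge m$ is a maximal inequality of Doob type for reverse submartingales, which states that if $\{Z_n,\mathcal F_n\}_{n\ge m}$ is a non-negative reverse submartingale with $\sup_n E[Z_n]<\infty$, then for $p>1$, $E[\sup_{n\ge m} Z_n^{p}]\le (p/(p-1))^{p}\,\sup_n E[Z_n^{p}]$ (see the reverse-martingale maximal inequalities in \cite{doob1953stochastic} or \cite{loeveprobability}).

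The key steps, in order, are as follows. First I would fix $p=s/r>1$ (this is exactly where the hypothesis $s>r$ is used) and consider $Z_n=\overline X_n^{-r}$, noting $Z_n^{p}=\overline X_n^{-s}$. Second, I would verify the uniform integrability-type bound $\sup_{n\ge m}E[Z_n^{p}]=\sup_{n\ge m}E[\overline X_n^{-s}]<\infty$; since $\{\overline X_n^{-s},\mathcal F_n\}$ is again a non-negative reverse submartingale, $E[\overline X_n^{-s}]$ is non-increasing in $n$ as we move forward, so the supremum over $n\ge m$ is attained at $n=m$, i.e.\ $\sup_{n\ge m}E[\overline X_n^{-s}]=E[\overline X_m^{-s}]$. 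Third, I would bound $E[\overline X_m^{-s}]$ by $E[X^{-s}]$: because $\overline X_m=m^{-1}\sum_{i=1}^m X_i\ge m^{-1}X_i$ for each $i$, and more simply by convexity $\overline X_m^{-s}\le m^{-1}\sum_{i=1}^m X_i^{-s}$ (Jensen applied to the convex function $x\mapsto x^{-s}$), whence $E[\overline X_m^{-s}]\le E[X^{-s}]<\infty$. Finally, I would invoke the reverse-submartingale maximal inequality to conclude
\[
E\!\left(\max_{n\ge m}\frac{1}{\overline X_n^{r}}\right)
= E\!\left(\max_{n\ge m} Z_n\right)
\le \left(E\!\left[\max_{n\ge m} Z_n^{p}\right]\right)^{1/p}
\le \left(\frac{p}{p-1}\right)\left(\sup_{n\ge m}E[\overline X_n^{-s}]\right)^{1/p}<\infty.
\]

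The main obstacle I anticipate is not the probabilistic estimate itself but pinning down the precise form of the maximal inequality for \emph{reverse} submartingales indexed by $n\ge m$ (a decreasing filtration), as opposed to the classical forward case; one must be careful that the relevant "maximum" is over the tail $\mathcal F_m\supseteq\mathcal F_{m+1}\supseteq\cdots$ and that the $L^p$ maximal bound applies with the supremal term $E[Z_m^{p}]$ playing the role of the "last" term. Once that inequality is correctly stated (it is standard — the reverse submartingale $\{Z_n\}_{n\ge m}$ restricted to any finite block $m\le n\le M$ is, read backwards, an ordinary submartingale, so Doob's inequality applies and one lets $M\to\infty$ by monotone convergence), the rest is the routine chain of Jensen applications above. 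A minor secondary point is ensuring $\overline X_n>0$ a.s., which holds since the $X_i$ are non-negative and not identically zero (the degenerate case $\overline X_n=0$ having already been excluded in the setup).
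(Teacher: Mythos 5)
Your proof is correct, and it follows the same overall strategy as the paper's --- exploit the reverse (sub)martingale structure of $\overline{X}_n^{-r}$ via a maximal inequality, then reduce everything to the single moment $E[\overline{X}_m^{-s}]$ and bound that by $E[X^{-s}]$ --- but both technical sub-steps are carried out differently. For the maximal inequality, the paper writes $E[\max_{n\ge m}\overline{X}_n^{-r}]\le 1+\int_1^\infty P(\max_{n\ge m}\overline{X}_n^{-r}\ge t)\,dt$ and applies the weak-type (Chebyshev--Doob) maximal bound $P(\max_{n\ge m} Z_n\ge t)\le t^{-\beta}E[Z_m^{\beta}]$ with $\beta=s/r>1$, integrating the tail to get the constant $1/(\beta-1)$; you instead invoke the strong $L^p$ Doob inequality for reverse submartingales with $p=s/r$, which yields the constant $p/(p-1)$ and requires the (correctly supplied) justification that the finite blocks read backwards are ordinary submartingales. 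For the moment bound, the paper uses the AM--GM inequality $\overline{X}_n\ge(\prod_i X_i)^{1/n}$ together with independence and the monotonicity of $p\mapsto\{E|Y|^p\}^{1/p}$, whereas you apply Jensen's inequality to the convex map $x\mapsto x^{-s}$ to get $\overline{X}_m^{-s}\le m^{-1}\sum_{i=1}^m X_i^{-s}$; your route is shorter and arguably cleaner, and it does not need independence (only identical marginals), while the paper's AM--GM route gives the same bound $E[\overline{X}_m^{-s}]\le E[X^{-s}]$ at the cost of an extra Lyapunov step. Your closing remarks --- that one must identify $E[Z_m^p]$ as the dominating term because expectations of a reverse submartingale decrease in $n$, and that $\overline{X}_n>0$ a.s.\ follows from $E[X^{-s}]<\infty$ --- are both correct and fill in points the paper leaves implicit.
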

\begin{proof}
For $\alpha\ge 1$, we have
\begin{align}
E\( \max_{ n \ge m } \frac{1}{ \overline{X}_n^{r} } \)  \le
1+ \int_{1}^{\infty}P\( \max_{ n \ge m } \frac{1}{ \overline{X}_n^{r} } \ge t \) dt \le 1+ \frac{E\( \overline{X}_{m}^{-r\beta} \) }{\beta -1}, \label{A21}
\end{align}
where $\beta >1$. The last inequality is obtained by applying maximal inequality for reverse submartingales  (see \cite{lee1990u}). Let $s=r\beta$.  Now, it is enough to show that if $E(X^{-s})<\infty$, then $E\( \overline{X}_n^{\,\, -s} \) < \infty$. Note that $\overline{X}_n \ge \( \prod_{i=1}^n X_i \)^{1/n}$ as the observations are nonnegative and
\begin{align}
E\( \overline{X}_n^{\,\, -s} \) \le E\left[ \( \prod_{i=1}^n \frac{1}{X_i}\)^{s/n} \right] = \l E\left[ \( \frac{1}{X_1}\)^{s/n}\right]\r^n.
\label{A22}
\end{align} 
The last equality is due to the i.i.d. property of the observations. We know that $ \l E\( |X|^p\) \r^{1/p}$ is a nondecreasing function of $p$ for $p>0$. Applying this result with $p=1/n \le 1$ in \eqref{A22}, we complete the proof.
\end{proof}
\begin{lemma}\label{A3}
If $E(X_{1}^{4})$ and
$E(X_{1}^{-\beta})$ exist for $\beta > 4$, then $E\left[ \underset{n \ge m} \sup V_n^2 \right] < \infty$ for $m \ge 4$.
\end{lemma}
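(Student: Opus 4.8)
The plan is to dominate $V_n^2$, uniformly over $n\ge m$, by a constant plus a fixed multiple of $S_n^2/\overline{X}_n^2$, and then to bound $E\!\left[\sup_{n\ge m}S_n^2/\overline{X}_n^2\right]$ by splitting the supremum and applying Lemma~\ref{A2} together with the maximal inequality for reverse submartingales.

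\emph{Step 1: a pointwise bound on $V_n^2$.} Since the $X_i$ are nonnegative, $|X_{i_1}-X_{i_2}|\le X_{i_1}+X_{i_2}$, and averaging over pairs gives $\widehat{\Delta}_n\le 2\overline{X}_n$; hence $\widehat{\Delta}_n^2/\overline{X}_n^2\le 4$ and $\widehat{\Delta}_n^2 S_n^2/(4\overline{X}_n^4)\le S_n^2/\overline{X}_n^2$. For the jackknife term, put $B_{jn}=(n-1)^{-1}\sum_{i\ne j}|X_i-X_j|$. Using $\binom{n}{2}\widehat{\Delta}_n-\binom{n-1}{2}\widehat{\Delta}_n^{(j)}=\sum_{i\ne j}|X_i-X_j|=(n-1)B_{jn}$, one checks that $\overline{W}_n=2\widehat{\Delta}_n$, $\sum_{j=1}^n B_{jn}=n\widehat{\Delta}_n$, and $W_{jn}-\overline{W}_n=2(B_{jn}-\widehat{\Delta}_n)$, so that
\[
s_{wn}^2=\frac{4n}{n-1}\Bigl(n^{-1}\sum\nolimits_j B_{jn}^2-\widehat{\Delta}_n^2\Bigr)\le\frac{4n}{n-1}\,n^{-1}\sum\nolimits_j B_{jn}^2 .
\]
Jensen's inequality applied to each $B_{jn}^2$ gives $n^{-1}\sum_j B_{jn}^2\le\binom{n}{2}^{-1}\sum_{i<j}|X_i-X_j|^2=2S_n^2$, hence $s_{wn}^2\le 16\,S_n^2$ for $n\ge 2$ and $s_{wn}^2/(4\overline{X}_n^2)\le 4\,S_n^2/\overline{X}_n^2$. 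Discarding the negative term $-\widehat{\Delta}_n\widehat{\tau}_n/\overline{X}_n^3$ and adding the three remaining bounds yields $V_n^2\le 5\,S_n^2/\overline{X}_n^2+4$ for every $n\ge m$.

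\emph{Step 2: bounding $E[\sup_{n\ge m}S_n^2/\overline{X}_n^2]$.} As all quantities are nonnegative, $\sup_{n\ge m}S_n^2/\overline{X}_n^2\le\bigl(\sup_{n\ge m}S_n^2\bigr)\bigl(\sup_{n\ge m}\overline{X}_n^{-2}\bigr)$, so by the Cauchy--Schwarz inequality
\[
E\!\left[\sup_{n\ge m}\frac{S_n^2}{\overline{X}_n^2}\right]\le\Bigl(E\bigl[\sup_{n\ge m}S_n^4\bigr]\Bigr)^{1/2}\Bigl(E\bigl[\sup_{n\ge m}\overline{X}_n^{-4}\bigr]\Bigr)^{1/2}.
\]
The second factor is finite by Lemma~\ref{A2} with $r=4$ and $s=\beta>4$. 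For the first, $S_n^2$ is a U-statistic (with kernel $\tfrac12(x-y)^2$) and therefore a reverse martingale with respect to $\mathcal{F}_n$, so $S_n^4$ is a nonnegative reverse submartingale; the $L^2$ maximal inequality for reverse submartingales (cf.\ \cite{lee1990u}) gives $E[\sup_{n\ge m}S_n^4]\le 4\,E[S_m^4]$, and $E[S_m^4]<\infty$ follows from $E[X^4]<\infty$ since $S_m^2\le\sum_{i\le m}X_i^2$ forces $S_m^4\le m\sum_{i\le m}X_i^4$. Combining Steps 1 and 2 gives $E[\sup_{n\ge m}V_n^2]\le 5\,E[\sup_{n\ge m}S_n^2/\overline{X}_n^2]+4<\infty$.

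The only delicate point is the estimate $s_{wn}^2\le 16\,S_n^2$ in Step 1: a crude bound such as $|\widehat{\Delta}_n-\widehat{\Delta}_n^{(j)}|\le c\,\overline{X}_n$ is far too lossy (it would make $s_{wn}^2$ of order $n$ rather than bounded), so the algebraic identities $\overline{W}_n=2\widehat{\Delta}_n$, $\sum_j B_{jn}=n\widehat{\Delta}_n$, and $W_{jn}-\overline{W}_n=2(B_{jn}-\widehat{\Delta}_n)$ are genuinely needed before Jensen's inequality can be applied. A less hands-on route would invoke the reverse-submartingale machinery for jackknifed U-statistics as in \cite{sproule1969sequential} and Theorem 3.2.1 of \cite{sen1981sequentialbook}, but the argument above is self-contained. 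Everything else is routine once Lemma~\ref{A2} and the reverse-submartingale maximal inequality are available.
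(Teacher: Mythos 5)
Your proof is correct, and it departs from the paper's argument in a substantive way. The paper also reduces the problem to bounding $E\bigl[\sup_{n\ge m} S_n^2/\overline{X}_n^2\bigr]$, $E\bigl[\sup_{n\ge m} \widehat{\tau}_n/\overline{X}_n^2\bigr]$ and $E\bigl[\sup_{n\ge m} s_{wn}^2/\overline{X}_n^2\bigr]$ via the bound $\widehat{\Delta}_n\le 2\overline{X}_n$, and it handles the $S_n^2$ and $\overline{X}_n^{-4}$ factors exactly as you do (Cauchy--Schwarz, Lemma \ref{A2}, and the reverse-submartingale maximal inequality); but for the jackknife term it simply cites Sen and Ghosh (1981, p.~338) for $E\bigl[\sup_{n\ge m} s_{wn}^4\bigr]<\infty$, and it bounds the $\widehat{\tau}_n$ term separately by the maximal inequality applied to the U-statistic $\widehat{\tau}_n$ (which is where the paper actually consumes the hypothesis $E[X^4]<\infty$ for that term). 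You instead (a) discard the $\widehat{\tau}_n$ term outright, which is legitimate since $-\widehat{\Delta}_n\widehat{\tau}_n/\overline{X}_n^3\le 0$ and only an upper bound on $\sup_{n\ge m}V_n^2$ is needed here and in the dominated-convergence step of Theorem \ref{thm:main}(ii), and (b) replace the external Sen--Ghosh moment bound by the explicit identities $W_{jn}=2B_{jn}$, $\overline{W}_n=2\widehat{\Delta}_n$ and the Jensen estimate $n^{-1}\sum_j B_{jn}^2\le 2S_n^2$, yielding the deterministic bound $s_{wn}^2\le 16\,S_n^2$ (your algebra checks out, including the U-statistic identity $S_n^2=\binom{n}{2}^{-1}\sum_{i<j}\tfrac12(X_i-X_j)^2$). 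What this buys is a self-contained proof that funnels everything through the single quantity $S_n^2/\overline{X}_n^2$ and needs only one application of the maximal inequality; what it gives up is generality --- the paper's route, by controlling $E[\sup_n|\widehat{\tau}_n/\overline{X}_n^2|]$ and $E[\sup_n s_{wn}^2/\overline{X}_n^2]$ directly, also delivers the finiteness conditions listed in the footnote to Theorem \ref{thm:main} for distributions lacking negative moments, and it establishes integrability of $\sup_n|V_n^2|$ rather than just an integrable upper envelope. Your constant $4$ in $E[\sup_{n\ge m}S_n^4]\le 4E[S_m^4]$ (Doob $L^2$ applied to the reverse martingale $S_n^2$) is a correct alternative to the paper's $(4/3)^4$ (Doob $L^4$ applied to $S_n$); either suffices.
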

%
\begin{proof}
To prove lemma \ref{A3}, it is enough to show that: ${ E}\left[ \underset{n\geq m}{\sup } \,s_{wn}^{2}\overline{X}
_{n}^{-2}\right]$, ${ E}\left[
\underset{n\geq m}{\sup }\left\vert \frac{\widehat{\Delta }_{n}}{\overline{X}
_{n}^{3}}\widehat{\tau }_{n}\right\vert \right]$, ${ E}\left[ \underset{n\geq m}{\sup }\frac{\widehat{\Delta }
_{n}^{2}}{\overline{X}_{n}^{2}}\right]$, and  ${ E
}\left[ \underset{n\geq m}{\sup }\frac{\widehat{\Delta}_n^2}{\overline{X}_{n}^{4}}S_{n}^{2}\right]$ are finite. 
We note that, $0\leq\frac{\widehat{\Delta }_{n}}{2\overline{X}}\leq 1$. So, it is enough to show that ${ E}\left[ \underset{n\geq m}{\sup } \,s_{wn}^{2}\overline{X}_{n}^{-2}\right]$, ${ E}\left[\underset{n\geq m}{\sup } \frac{\widehat{\tau }_{n}}{\overline{X}_{n}^{2}}\right]$ and ${ E}\left[ \underset{n\geq m}{\sup }\frac{S_{n}^{2}}{\overline{X}_{n}^{2}}\right]$ are finite.  
Following Sen and Ghosh (p. 338, \cite{sen1981sequential}), we have $E\left[ \underset{n\geq m}{\sup }\, s_{wn}^4\right] < \infty$ if $E[X_1^{\alpha}] < \infty$ for $\alpha >4$ and $m \ge 4$. By lemma \ref{A2},  $E\left[ \underset{n\geq m}{\sup }\, \overline{X}_n^{\, -4} \right] < \infty$ if $E[X_1^{-\beta}] < \infty$ for $\beta >4$. Therefore,
\begin{align}
E\( \underset{n\geq m}{\sup } \, s_{wn}^2 \overline{X}_n^{\, -2} \) \le \l E\( \underset{n\geq m}{\sup } \, s_{wn}^4\) E\(\underset{n\geq m}{\sup }\, \overline{X}_n^{\, -4} \)  \r^{1/2} < \infty.
\end{align}
We note that $\widehat{\tau}_n$ and $S_n^2$ are U-statistics. Using lemma 9.2.4 of  \cite{ghosh1997sequential}, 
\[E\( \underset{n\geq m}{\sup }\left\vert \widehat{\tau}_n^2 \right\vert \) \le 4 E\( \left\vert \widehat{\tau}_m^2 \right\vert \)
\text{ and } E\( \underset{n\geq m}{\sup }\left\vert S_n^4 \right\vert \) \le  \left(\frac{4}{3}\right)^4E\( \left\vert S_m^4 \right\vert \).
\]
Applying Cauchy-Schwarz inequality,
\[
 E\( \underset{n\geq m}{\sup }\left\vert \frac{\widehat{\tau }_{n}}{\overline{X}_{n}^{2}}\right\vert \) 
\le \l E\( \underset{n\geq m}{\sup }\left\vert \widehat{\tau}_n^2 \right\vert \) \r^{\frac{1}{2}} \l E\( \underset{n\geq m}{\sup }\left\vert \overline{X}_n^{\,-4} \right\vert \) \r^{\frac{1}{2}} < \infty,
\]
and \[
 E\( \underset{n\geq m}{\sup }\left\vert \frac{S_{n}^2}{\overline{X}_{n}^{2}}\right\vert \) 
\le \l E\( \underset{n\geq m}{\sup }\left\vert S_n^4 \right\vert \) \r^{\frac{1}{2}} \l E\( \underset{n\geq m}{\sup }\left\vert \overline{X}_n^{\,-4} \right\vert \) \r^{\frac{1}{2}} < \infty,
\]
if $E(X_1^4)$ and $E(X_1^{-\beta})$ exist for $\beta > 4$. This completes the proof of lemma \ref{A3}.
\end{proof}
\noindent Below, we prove theorem \ref{thm:main} by using lemma \ref{A2} and \ref{A3}. 
\\ 

\emph{(i)} The definition of stopping rule $N_d$ in \eqref{stopping-rule} yields
\begin{align}\label{T1}
\left(\frac{z_{\alpha/2}}{d}\right)^2\, V_{N_{d}}^2 \, \le N_d  \, \le mI(N_d=m) \, + \, \left(\frac{z_{\alpha/2}}{d}\right)^2\left( { V}_{N_{d}-1}^2+{ (N}_{d}{ -1)}^{-1 }\right).
\end{align}
Since $N_d \to \infty$ a.s. as $d \downarrow 0$ and $V_n \to \xi$ a.s. as $n \to \infty$, by theorem 2.1 of Gut (2009), $V_{N_d}^2 \to \xi^2$ a.s..  Hence, dividing all sides of \eqref{T1} by $C$ and letting $d\downarrow 0$,
we prove $N_{d}/C\to 1$ a.s. as $d\downarrow 0$.

\noindent \emph{(ii)} Since $N_d \ge m$ a.s., dividing \eqref{T1} by $C$ yields
\begin{align}\label{T2}
N_d /C - mI(N_d=m)/C \le   \frac{1}{\xi^2} \( \underset{d >0}\sup \, V_{N_d -1}^2 + (m-1)^{-1}\) \,\,\text{ almost surely}.
\end{align}
Since $E\(\underset{d >0}\sup V_{N_d -1}^2 \) < \infty$ by lemma \ref{A3} and $N_{d}/C\to 1$ a.s. as $d\downarrow 0$, by the dominated convergence theorem, we conclude that  $\underset{d \downarrow 0}\lim E(N_d / C) =1$.
 This completes the proof of theorem \ref{thm:main}. 
 %
\subsection{Proof of Theorem \ref{thm:2}}
%
In order to show that our procedure satisfies the asymptotic consistency property, we will derive an Anscombe-type random central limit theorem for Gini index. This requires the existence of usual central limit theorem of Gini index and uniform continuity in probability (u.c.i.p.) condition. For details about the u.c.i.p. condition, we refer to \cite{anscombe1953sequential}, \cite{sproule1969sequential}, \cite{isogai1986asymptotic}, and \cite{mukhopadhyay2012tribute} etc.

First of all, let us define $n_1=(1-\rho)C$ and $n_2=(1+\rho)C$ for $0<\rho<1$. Now, we know from \cite{xu2007} that $\mathbf{Y}_n=\left(\sqrt{n}(\widehat{\Delta}_n-\Delta), \sqrt{n}(\bar{X}_n-\mu)\right)'\overset{\mathcal{L}}{\to}N_2(\mathbf{0},\Sigma)$, where
\[\Sigma =\left( \begin{array}{cc}
4\sigma_1^2 & 2(\tau-\mu\Delta)\\
2(\tau-\mu\Delta) & \sigma^2\\
\end{array} \right).\]
First, let us prove that $\mathbf{Y}_{N_d}\overset{\mathcal{L}}{\to}N_2(\mathbf{0},\Sigma)$.
Define $\mathbf{D}'=(a_0\text{  }a_1)$. Note that $\mathbf{D}'\mathbf{Y}_{N_d}=\mathbf{D}'\mathbf{Y}_C+(\mathbf{D}'\mathbf{Y}_{N_d}-\mathbf{D}'\mathbf{Y}_C)$. Thus, it is enough to show that $(\mathbf{D}'\mathbf{Y}_{N_d}-\mathbf{D}'\mathbf{Y}_C)\overset{P}{\to}0$ as $d\downarrow 0$. We can write
 \begin{align}
 \label{Deqn}
(\mathbf{D}'\mathbf{Y}_{N_d}-\mathbf{D}'\mathbf{Y}_C)  = & a_0\sqrt{N_d}(\widehat{\Delta}_N-\widehat{\Delta}_C)+a_1\sqrt{N_d}(\bar{X}_{N_d}-\bar{X}_C) \notag \\
&+(\sqrt{N_d/C}-1)\mathbf{D}'\mathbf{Y}_C.
\end{align}
Fix some $\epsilon >0$ and note that
\begin{align*}
&P\left\{\lvert a_0\sqrt{N_d}(\widehat{\Delta}_{N_d}-\widehat{\Delta}_C)+a_1\sqrt{N_d}(\bar{X}_{N_d}-\bar{X}_C)\rvert >\epsilon\right\}\notag\\
&\leq P\left\{\lvert a_0\sqrt{N_d}(\widehat{\Delta}_{N_d}-\widehat{\Delta}_C)+a_1\sqrt{N_d}(\bar{X}_{N_d}-\bar{X}_C)\rvert >\epsilon,\lvert N_d-C\rvert<\rho C\right\} \notag \\
&+P[\lvert {N_d}-C\rvert>\rho C]\notag\\
&\leq P\left\{\underset{n_1<n<n_2}{max}\lvert \sqrt{n}\lvert\widehat{\Delta}_n-\widehat{\Delta}_C\rvert>\frac{\epsilon}{2\lvert a_0\rvert}\right\}+P\left\{\underset{n_1<n<n_2}{max}\lvert \sqrt{n}\lvert\bar{X}_n-\bar{X}_C\rvert>\frac{\epsilon}{2\lvert a_1\rvert}\right\}\notag\\
& + P[\lvert {N_d}-C\rvert>\rho C]
\end{align*}
Here, $\widehat{\Delta}_n$ and $\bar{X}_n$ are both U-statistics which satisfy Anscombe's u.c.i.p. condition (for e.g. see \cite{sproule1969sequential}). Using u.c.i.p. condition and the fact that ${N_d}/C\overset{a.s.}{\to}1$, we conclude that for given $\epsilon >0$, there exist $\eta >0$ and $d_0 >0$ such that
\[
P\{\lvert a_0\sqrt{N_d}(\widehat{\Delta}_{N_d}-\widehat{\Delta}_C)+a_1\sqrt{N_d}(\bar{X}_{N_d}-\bar{X}_C)\rvert >\epsilon\}<\eta \quad \text {for all }\,\, d \le d_0.
\]
This implies $a_0\sqrt{N_d}(\widehat{\Delta}_{N_d}-\widehat{\Delta}_C)+a_1\sqrt{N_d}(\bar{X}_{N_d}-\bar{X}_C)  \overset{P}{\to} 0$ as $d \downarrow 0$. Also, note that $\left(\sqrt{N_d/C}-1\right)\mathbf{D}'\mathbf{Y}_C \overset{P}{\to} 0 $ as $d \downarrow 0$ since ${N_d}/C  \to 1$ almost surely and $\mathbf{D}'\mathbf{Y}_C  \overset{\mathcal{L}}{\to}N_2(\mathbf{0},\Sigma)$.
Thus, from \eqref{Deqn}, we conclude $(\mathbf{D}'\mathbf{Y}_{N_d}-\mathbf{D}'\mathbf{Y}_C)\overset{P}{\to}0$, that is, $\mathbf{Y}_{N_d}\overset{\mathcal{L}}{\to}N_2(\mathbf{0},\Sigma)$.
Now, define $G(u,v)=\frac{u}{2v}$, if $v\neq 0$. Using Taylor's expansion, we can write
\begin{align}
\sqrt{N_d}(G(\widehat{\Delta}_{N_d},\bar{X}_{N_d})-G(\Delta,\mu))&=\sqrt{N_d}\left(\frac{\widehat{\Delta}_{N_d} - \Delta}{2\mu} - \frac{\Delta}{2 \mu^2}(\overline{X}_{N_d} - \mu) + R_{N_d}\right),
\label{taylor}
\end{align}
where $R_{N_d} = -2(\widehat{\Delta}_{N_d} - \Delta)(\overline{X}_{N_d} - \mu)/ b^2 \, +\, 4 a (\overline{X}_{N_d} - \mu)^2/ b^3$, $a=\Delta + p(\widehat{\Delta}_{N_d} - \Delta)$, $b=2\mu + p(2\overline{X}_{N_d} - 2\mu)$, and $p \in (0, 1)$.
%
%
%
Rewriting (\ref{taylor}) in the vector-matrix form, we get
\begin{align}
\sqrt{N_d}(G(\widehat{\Delta}_{N_d},\bar{X}_{N_d})-G(\Delta,\mu))=\mathbf{D}'\mathbf{Y}_{N_d}+ \sqrt{N_d}R_{N_d},
\end{align}
where $\mathbf{D}'=\left(\frac{1}{2\mu}, \frac{-\Delta}{2\mu^2}\right)$. Note that $\sqrt{N_d} (\overline{X}_{N_d} - \mu)$ converges in distribution to a normal distribution by Anscombe's CLT and both $\left( \widehat{\Delta}_{N_d} - \Delta \right)$ and $\left(\overline{X}_{N_d} - \mu \right)$ converges to 0 almost surely. This yields $ \sqrt{N_d}R_{N_d} \overset{P}{\to} 0$ as $d \downarrow 0$.
 Hence, $\sqrt{N_d}(\widehat{G}_{N_d}-G_F)\overset{\mathcal{L}}{\to}N(\mathbf{0},D'\Sigma D)$  as $d\downarrow 0$. This completes the proof of theorem  \ref{thm:2}.

\newpage

\begin{table}
\centering
\caption{Performance of the proposed sequential procedure when the data is from Gamma, Log-normal, and Pareto}
\label{tab1}
\begin{tabular}{cccccc} \hline
&&&&& \\ [-1.2ex]
Distribution &$\,\;\,\overline{N}$ &$C$ &$\overline{N}/C$ &$\max (N)$ &$p$\\  [1ex]
&$\underset{}{\mathbf{s(}\overline{N}\mathbf{)}}$&&& &$s_p$\\ \hline%
&&&&& \\ [-1.2ex]
 Gamma & 1259.492 & 1267 & 0.9941 &1594 & 0.878\\[1ex]
&4.3639&&& &0.0073 \\ \hline%
&&&&& \\ [-1.2ex]
Log-normal &1429.349 & 1424 & 1.0038 & 2391 &0.9015\\
&4.1393&&& &0.0067\\ \hline%
&&&&& \\ [-1.2ex]
Pareto &654.5364 & 686 & 0.9541 & 1666 &0.9018\\
&4.2151&&& &0.0063\\ \hline%
\end{tabular}
\end{table}

\end{document}